\newcommand{\R}{\mathbb{R}}
\newcommand{\Rn}{\R^n}
\newcommand{\G}{\mathcal{G}}
\newcommand{\fd}{\tilde{f}}
\def\BibTeX{{\rm B\kern-.05em{\sc i\kern-.025em b}\kern-.08em
    T\kern-.1667em\lower.7ex\hbox{E}\kern-.125emX}}
    \newtheorem{theorem}{Theorem}
\newtheorem{lemma}{Lemma} 
\newtheorem{problem}{Problem} 
\newtheorem{assumption}{Assumption}
\newtheorem{definition}{Definition}
\begin{document}

\title{Optimal Pinning Control for Synchronization over Temporal Networks\\
}
\author{Aandrew Baggio S and Rachel Kalpana Kalaimani
\thanks{The authors are with the Department of Electrical Engineering, Indian Institute of Technology Madras, Chennai 600036 India.}
\thanks{
        {Email: \small ee20d067@smail.iitm.ac.in and  rachel@ee.iitm.ac.in}}%
}


\maketitle

\begin{abstract} 
In this paper, we address the finite time synchronization of a network of dynamical systems with time-varying interactions modelled using temporal networks. We synchronize a few nodes initially using external control inputs. These nodes are termed as {\it pinning} nodes. The other nodes are synchronized by interacting with the pinning nodes and with each other. 
We first provide sufficient conditions for the network to be synchronized. Then we formulate an optimization problem to minimize the number of pinning nodes for synchronizing the entire network. Finally, we address the problem of maximizing the number of synchronized nodes when there are constraints on the number of nodes that could be pinned. We show that this problem belongs to the class of NP-hard problems and propose a greedy heuristic. We illustrate the results using numerical simulations.
\end{abstract}

\begin{keywords}
Pinning control, Temporal networks, Submodular optimization, Complex networks
\end{keywords}

Modern applications involve a large number of dynamical entities interacting with each other. Synchronization of all these entities to some desired system trajectory is of interest in certain applications such as power grids with multiple generators \cite{zha2009}, secure communication \cite{liu2021edge}, harmonic oscillation generation, regulatory mechanisms in biological processes such as synchronous beat of heart cells \cite{pes1975}, opinion dynamics, etc. Synchronization in networked dynamical systems is well studied in literature (\cite{zha2009,li2003,sorrentino2007controllability}). Paper \cite{tang2014synchronization} provides a survey on synchronization in complex networks and its applications.

In the context of large networks, pinning synchronization is of interest. Here, only a fraction of the systems in the network are pinned or in other words, influenced by an external control input. The effect of this input propagates to  other systems by interactions that exist among the systems. The pinning method has been of interest to the research community for the past two decades\cite{wang2002pinning,su2012decentralized,li2004pinning,lu2010global,sorrentino2007controllability}. 
The literature discusses two major approaches to selecting the pinning nodes : Master Stability Function-based approaches (\cite{zhan2007chaos,sorrentino2007controllability,porfiri2008criteria}) and the Lyapunov Function based approaches (\cite{song2009pinning,liuzza2020pinning,della2023nonlinear,montenbruck2015practical}).

The topology of the network which depicts the interactions among systems plays an important role in the synchronization of these networked systems. Literature predominantly assumes these interactions to be static. However, the interactions in complex networks, for example, regulatory networks from biological sciences, epidemic spreading and social networks are known to be time-varying in nature. 
The time-varying interactions influence the sequence of synchronization of the systems in the complex network and therefore should be addressed appropriately.
Hence, we propose to address the synchronization problem for non-static interactions by using  temporal networks. Temporal networks have been used to model time-varying interactions in literature \cite{val2022,masuda2013temporal,ghosh2022synchronized}.
For time-varying interactions, a switched system framework has been adopted in \cite{zha2009}. Here synchronization under arbitrary switching topology is discussed. Then a switching sequence, choosing from a collection of pre-defined topologies, is designed for synchronization. Simultaneous triangularizability of the connection matrices is required for this case. In the temporal network framework, the network topology is not limited to a finite set. Also in our model, we assume that the external input is used only at the beginning for the pinning nodes and then the external input is removed. The systems in the network get synchronized by interactions with their neighbours. Additionally, we design control laws for synchronization in finite time and hence at the end of the temporal network evolution, the network would be synchronized. 

Optimization problems have also been addressed in the pinning control synchronization literature. The pinning nodes are chosen to optimize the  convergence rate in \cite{jafarizadeh2022pinning}, the number of synchronized nodes in \cite{delellis2018partial} and \cite{iudice2022bounded} and stabilizability in \cite{lu2010global}. While all the above papers address the Optimization problem in static networks, we focus on temporal networks.
Synchronization problems are also  addressed when there are time delays \cite{zhou2021pinning}, quantized output control with actuator faults \cite{yan2020} and  noise perturbations \cite{wang2022pinning}.

We consider a network of dynamical systems each with non-linear dynamics with time-varying interactions that are modelled using temporal networks.  We are interested in finite time synchronization of this network using pinning control.
We summarize our contribution as follows.
\begin{itemize}
    \item We formulate sufficient conditions for the networked system to be synchronized in finite time for a given set of pinned nodes. 
    \item Find the minimum number of nodes to be pinned to achieve synchronization of the entire network. This is formulated as a Linear Programming problem. 
    \item Given some constraints on the number of nodes to be pinned, find the maximum number of nodes that could be synchronized. We show that this problem is NP-hard and hence propose a greedy heuristic. We analyze the performance of the heuristic using numerical simulations. 
\end{itemize}

\section{Preliminaries}
Notations : $\mathbb{R}^{n}$ denotes all vectors with real entries of size $n$, $\mathbb{R}^{n \times m}$ denotes the set of all matrices with real entries of size $n \times m$. For $A \in \mathbb{R}^{n \times m}$, $A_{ij}$ denotes the $(i,j)^{th}$ element of $A$ and $A^T$ denotes the transpose of $A$. If $A=A^T \in \mathbb{R}^{n \times n}$, then $\lambda_{max}(A)$ denotes the largest eigenvalue. 
 $\mathbb{0}_{n}$ is a vector of all zeros in $\Rn$. 
 $\mathbb{R}_{+}$ is the set of all positive real numbers. 
 
Let $G=(V,E)$ denote a digraph (directed graph), where $V=\{1, \dots,n\}$ is the set of its vertices  and $E$ is the set of its edges. $E$ is a set of tuples of the form $(i,j)$ which indicates the presence of an edge from  node $j$ to  node $i$. $A$ is the adjacency matrix of $G$ where $A_{ij}> 0$ if $(i,j)\in E$ and $A_{ij}=0$ otherwise. 
$D$ is a diagonal matrix with its $i^{th}$ diagonal element being the in-degree of the $i^{th}$ node in $G$. Then
$L:=D-A$ is defined as the Laplacian matrix associated with $G$. 
A path in a digraph is an ordered sequence of edges, $\{e_{1},\dots,e_{p}\}$, where the node at which an edge ends is the node from which the edge next in the sequence starts (i.e) $e_{k}=(i,j),\text{ }e_{k+1}=(l,i)$ $\forall$ $k\in\{1 ,\dots, , p-1\}$. 
A directed cycle in a digraph is a directed path where the node from where the first edge starts is the same node in which the last edge ends. (i.e) $e_{p}=(i,j)$ and $e_{1}=(j,k)$. 
Digraphs without any directed cycles are termed as Directed Acyclic Graphs (DAGs). A DAG with a finite number of nodes will have at least one node with no incoming edges. We term such nodes as root nodes. 

Let $V$ be a finite set. Consider a set function $f:2^V\rightarrow \R$ that maps any subset of $V$ to $\R$. 
The function $f$ being submodular is defined below. 
\begin{definition}
    A set function $f:2^V\rightarrow \R$ is submodular if for all subsets $P\subseteq Q \subseteq V$ and all set members $s\notin Q$, it holds that
\begin{equation}
   f(P\cup \{s\})-f(P) \geq f(Q\cup \{s\})-f(Q)\label{eq:submod}
\end{equation}
\end{definition}
Another equivalent condition for submodularity is : for any two subsets $P$ and $Q$ of $V$, it holds that
\begin{equation}
  f(P)+f(Q)\geq f(P\cup Q)+f(P\cap Q) \label{eq:submod1}  
\end{equation}

\section{Problem Formulation}\label{sec:prob_form}

We consider N identical dynamical systems  each with n states. They interact with each other and these interactions vary with time. Each interaction persists for a fixed time period. The behaviour of such systems can be modelled using temporal networks defined below.
\begin{definition}
    A temporal network is an ordered sequence of graphs denoted by $\{G_k(V,E_k)\}$, with $k=1$ , . . , $T$ on a set of vertices $V$. Each $v_i\in V$ refers to  a dynamical system. Each element in $E_k$ indicates a directed interaction between two nodes in the time interval $[k\tau$, $(k+1)\tau)$.
\end{definition}
Each graph $G_k$ which persists for time $\tau$ is referred to as a {\it snapshot}. Since each element of $V$ refers to a system. We use nodes and systems interchangeably. 
We assume that we cannot control the sequence of the snapshots. 
Our objective in this case is to synchronize all the systems to some desired trajectory at the end of the temporal network evolution, i.e., at time $\tau T$
 \begin{definition}
    The state vector of a system $i$, denoted by $x_i$, is said to be synchronized to a trajectory $s(t)$ if the following holds
    \begin{align*}
    \lim_{t\rightarrow \infty}||x_{i}(t)-s(t)||_{2} \rightarrow 0
    \end{align*}
\end{definition}
The above definition refers to asymptotic synchronization. We focus on finite time synchronization. 
\begin{definition}
    The states of a system $i$, denoted by $x_i$, is said to be synchronized to a trajectory $s(t)$ in finite time if there exist some $t_{f}>0$ such that the following holds
    \begin{align*}
    ||x_{i}(t)-s(t)||_{2} &= 0 \text{ }\forall\text{ }t\geq t_{f}
    \end{align*}
\end{definition}

Literature on synchronization over static networks assumes the following standard model.
 \begin{equation}
  \dot{x}_{i}(t) = f(x_{i}(t))-c\sum_{j=1}^{N} L_{ij}x_{i}(t)\label{eq:sysmodel}
    \end{equation}
 where $f :\mathbb{R}^{n}\rightarrow \mathbb{R}^{n}$ represents the dynamics of each system, $c\geq0$ is the coupling strength and $L$ represents the Laplacian matrix of the graph depicting the interactions. In the above model, note that the interactions across agents are linear. In \cite{chen2021finite}, the interactions across agents are modified to be non-linear to achieve  finite time synchronization. Inspired from the same, we propose the following dynamics for each agent.  Let ${x}_{i}(t)\in\Rn$ denote the state vector of each agent and 
 $X(t)\in \mathbb{R}^{nN} :=\begin{bmatrix}x_{1}^{T}(t)&\cdots&x_{n}^{T}(t)\end{bmatrix}^{T}$.
  \begin{equation}
  \dot{x}_{i}(t) = f(x_{i}(t))-cg_i^k(X(t)),\,  k\tau\leq t < (k+1)\tau \label{eq:sysmodel - unpinned}
\end{equation}
 where $c\geq 0$ is the coupling strength and 
 
\small{
\begin{align*}
g_{i}^k(X(t)) = &\text{ }\begin{cases}
            \dfrac{\sum_{j=1}^{N}L_{ij}(k)x_{j}(t)}{||\sum_{j=1}^{N}L_{ij}(k)x_{j}(t)||_{2}}, &||\sum_{j=1}^{N}L_{ij}(k)x_{j}(t)||_{2}\neq 0\\
            0, &||\sum_{j=1}^{N}L_{ij}(k)x_{j}(t)||_{2}=0
        \end{cases}, &\nonumber\\
\end{align*}}
\normalsize
$L(k)$ is the Laplacian matrix associated with $G_k$.

 We assume the control model as follows. Initially a few nodes, $S_0\subset V$ will be synchronized to a desired trajectory $s(t)$ by using an external input. This can be achieved in finite time, for an appropriate choice of input gain, as discussed in some literature on Sliding mode Control \cite{yu2017sliding,edwards1998sliding,shtessel2014sliding}. This is elaborated in the Appendix in Section \ref{sec:appendix}. We term the initial set of nodes synchronized by an external input as {\it pinning nodes}. 
 Then, the question that is of interest to us is whether the remaining nodes are also synchronized to $s(t)$ by interacting with other nodes. These time-varying interactions  are depicted by the temporal network $G_k$, for $k=1,\dots T$. 
Let $S_k$ denote the set of all nodes that are synchronized at time $k\tau$. $S_k$ depends on $S_{k-1}$, $G_k$ and $c$. We assume we can choose a high enough value for $c$ such that in finite time $\tau$, depending on $S_{k-1}$ and $G_k$ all possible nodes that can be synchronized will be in $S_k$. A guideline for choosing such a $c$ is provided in Appendix. We next state some of our assumptions on the model. First, we assume the dynamics of all the systems to satisfy the QUAD property defined as follows.
\begin{definition}
A function $f:\mathbb{R}^{n}\rightarrow \mathbb{R}^{n}$ is QUAD($\Delta$ , $\omega$) if it satisfies the following inequality for all $x,y \in \mathbb{R}^{n}$
\begin{align}
&(x-y)^{T}(f(x)-f(y))\leq(x-y)^{T}\left(\Delta-\omega I\right)(x-y)\label{eq:QUAD Condtition}\\
&\Delta\in \mathbb{R}^{n\times n} \text{ is a symmetric matrix and } \omega\in\mathbb{R}_{+}\cup 0 \text{ is a }\nonumber\\
&\text{ non - negative scalar)}\nonumber
\end{align}
\end{definition}

\begin{assumption}
$f$ in \eqref{eq:sysmodel} is QUAD ($\Delta$ , $\omega$)\label{eq:QUAD Assumption}
\end{assumption}
QUAD assumption is commonly used in the literature (\cite{liu2008boundedness,delellis2009novel,delellis2010quad}) when Lyapunov function based approaches are used to derive sufficiency conditions for synchronization.\\ 
The next assumption is on the graph $G_k$ for each snapshot.
\begin{assumption}
    $G_{k}$ is a directed acyclic graph $\forall$ $k=0\dots T$\label{eq:Acyclicity Assumption}
\end{assumption}This assumption helps in  analysing the synchronization of a networked system by exploiting the hierarchy that exists in DAGs. This will be explained in detail in the upcoming section. 
The first problem that we address is to identify conditions for synchronization, given set of pinning nodes. 
Recall that $S_{k}$ depends upon $G_k$ and $S_{k-1}$. We require $S_{T}=V$.
\begin{problem}
Given a set $S_0\subset V$, provide the  conditions for $S_T=V$.
\end{problem}
The next problem is about optimizing the set of pinning nodes.
\begin{problem}
Minimize $S_0$ such that $S_T=V$
\end{problem}
In certain scenarios, the above optimization problem would result in a set $S_0$ with a large cardinality. Practically it might not be feasible to pin a large number of nodes. Hence we address the question of maximizing the number of synchronized nodes when there are constraints on the number of nodes that could be pinned initially. 
\begin{problem}
Maximize $|S_T|$ such that $|S_0|\leq q$
\end{problem}
\section{Main Results}
In this section, we address the three problems mentioned in the problem formulation. We first explain about the graph structure for each $G_k$ and then about the construction of a fused graph $\G$ formed using all the graphs $G_1,G_2,\dots G_T$ in the temporal network. 

In a time interval $[k\tau, (k+1)\tau)$, the root nodes in $G_k$ retain their status (synchronized or unsynchronized) from the time $k\tau$ as there are no incoming edges to these nodes in this duration $[k\tau, (k+1)\tau)$. In order to capture this, we construct a fused graph $\G$ as explained below. 
We stack all the graphs one below the other as shown in Figure \ref{fig:Supergraph Example} on the left side. There is a temporal network on 3 nodes. The nodes which are solid are the root nodes. The edges in each graph $G_k$ depict the interactions that occur in time interval $[k\tau,(k+1)\tau)$. Then we draw edges to each of the root nodes, starting from the same node in the previous time interval. This is illustrated in the right side of Figure \ref{fig:Supergraph Example} 

\begin{figure}[H]
\centering
\begin{tikzpicture}[scale=0.6]
    \node[text width=0.5cm] at (2,0.5) 
    {$\boldsymbol{G_{1}}$};
    \node[text width=0.5cm] at (2,-2.5) 
    {$\boldsymbol{G_{2}}$};
    \node[text width=0.5cm] at (2,-5.5) 
    {$\boldsymbol{G_{3}}$};
\begin{scope}[every node/.style={circle,ultra thick,minimum size=13pt,inner sep=0pt,draw=blue!100,draw}]
    \node (1) at (0,1) {\bf 1};
    \node (2) at (-0.866,-0.5) {\bf 2};
    \node[fill=black,text=white] (3) at (0.866,-0.5)  {\bf 3};
    
    \node[fill=black,text=white] (4) at (0,-2) {\bf 1};
    \node[fill=black,text=white] (5) at (-0.866,-3.5) {\bf 2};
    \node (6) at (0.866,-3.5) {\bf 3};    
    
    \node (7) at (0,-5) {\bf 1};
    \node[fill=black,text=white] (8) at (-0.866,-6.5) {\bf 2};
    \node[fill=black,text=white] (9) at (0.866,-6.5) {\bf 3};

    \node (10) at (4,1) {\bf 1};
    \node (11) at (3.134,-0.5) {\bf 2};
    \node[fill=black,text=white] (12) at (4.866,-0.5) {\bf 3};
    
    \node[fill=black,text=white] (13) at (4,-2) {\bf 1};
    \node[fill=black,text=white] (14) at (3.134,-3.5) {\bf 2};
    \node (15) at (4.866,-3.5) {\bf 3};    
    
    \node (16) at (4,-5) {\bf 1};
    \node[fill=black,text=white] (17) at (3.134,-6.5) {\bf 2};
    \node[fill=black,text=white] (18) at (4.866,-6.5) {\bf 3};
\end{scope}

\draw[blue, very thick,-stealth] (3) -- (1);
\draw[blue, very thick,-stealth] (1) -- (2);

\draw[blue, very thick,-stealth] (5) -- (6);

\draw[blue, very thick,-stealth] (9) -- (7);
\draw[blue, very thick,-stealth] (8) -- (7);

\draw[blue, very thick,-stealth] (12) -- (10);
\draw[blue, very thick,-stealth] (10) -- (11);

\draw[blue, very thick,-stealth] (14) -- (15);

\draw[blue, very thick,-stealth] (18) -- (16);
\draw[blue, very thick,-stealth] (17) -- (16);

\path[->, ultra thick, dashed] (10)  edge   (13);
\path[->, ultra thick, dashed] (11)  edge   (14);

\path[->, ultra thick, dashed] (14)  edge   (17);
\path[->, ultra thick, dashed] (15)  edge   (18);

\end{tikzpicture}
\caption{Fused graph $\G$ of the snapshots $G_{1}$, $G_{2}$ and $G_{3}$. The solid nodes are root nodes.}
\label{fig:Supergraph Example}
\end{figure}
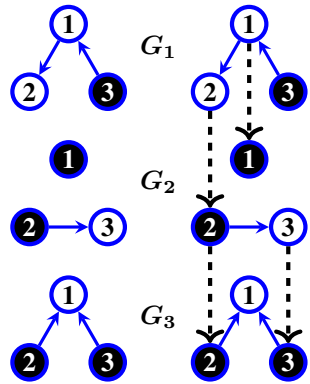

\subsection{Graph Conditions for Synchronizing the temporal network}  
\begin{theorem}\label{thm:cond_sync}
 Consider a network of dynamical systems each following the dynamics as given in \eqref{eq:sysmodel - unpinned},  with interactions depicted by a temporal network $\{G_k\}$, $k=1,\dots, T$. Let $\G$ denote the fused graph obtained from all $G_k$'s.  Given an $S_0$, if the set of all root nodes in $G_{1}$ that have a directed path in $\G$ to any of the root nodes in $G_{T}$ is a subset of $S_{0}$, then $S_{T}=V$, provided the coupling strength $c$ is sufficiently large.
\end{theorem}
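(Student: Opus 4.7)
My strategy is to split the argument into (i) a single-snapshot propagation rule, (ii) an induction on the layers of $\G$, and (iii) a small topological observation about how paths can enter the last layer.

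First I would establish, as a lemma, that for every snapshot $G_k$ and every synchronized set $A$ at time $(k-1)\tau$, taking $c$ large enough gives at time $k\tau$
\[
S_k \;=\; \{\,v\in V : \text{every root of } G_k \text{ with a directed path to } v \text{ in } G_k \text{ lies in } A\,\}.
\]
Because $G_k$ is a DAG, I can process it in topological order. Roots of $G_k$ have no in-edges during the interval, so a root freezes: it is in $S_k$ iff it is in $A$. For a non-root $v$, once all its in-neighbors in $G_k$ have synchronized earlier in the interval, the Lyapunov candidate $V_v=\tfrac12\|x_v-s\|_2^2$, combined with Assumption~\ref{eq:QUAD Assumption} on $f$ and the unit-norm character of the coupling $g_v^k$, yields an inequality of sliding-mode type, of the form $\dot V_v \le -(c-\gamma)\sqrt{2V_v}$, forcing finite-time arrival on $s(t)$ once $c$ exceeds a bound depending only on $\Delta$, $\omega$ and the reference trajectory. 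Summing these cascade settling times over the depth (at most $N$) of $G_k$ and across the finitely many snapshots, one $c$ works uniformly. Tightness of the characterization comes from the converse direction: if some root ancestor of $v$ in $G_k$ lies outside $A$, it pumps an unsynchronized signal through the DAG and prevents $v$ from settling.

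Next I would prove by induction on $k$ that $(v,k)\in\G$ is synchronized at time $k\tau$ iff every root of $\G$, which by construction coincides with every root of $G_1$, having a directed path to $(v,k)$ in $\G$ lies in $S_0$. The base $k=0$ is immediate. For the step I use the construction of $\G$: the only inter-layer edges go from $(i,k)$ to $(i,k+1)$ precisely when $i$ is a root of $G_{k+1}$. Hence the $\G$-ancestors of $(v,k+1)$ decompose as the $G_{k+1}$-ancestors of $v$ inside layer $k+1$, composed via dashed edges with the $\G$-ancestors of each such root pulled back to layer $k$. Feeding this decomposition into the single-snapshot lemma and the inductive hypothesis closes the step.

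Setting $k=T$, the induction reduces $S_T=V$ to: every root of $G_1$ with a path in $\G$ to any layer-$T$ copy lies in $S_0$. Any such path from layer $1$ to layer $T$ must cross from layer $T\!-\!1$ to layer $T$, and the only edges doing so are dashed edges terminating at roots of $G_T$. Moreover every $v\in V$ is $G_T$-reachable from some root of $G_T$, since $G_T$ is a finite DAG. Hence the roots of $G_1$ reaching some layer-$T$ copy in $\G$ are precisely the roots of $G_1$ reaching some root of $G_T$, which is the hypothesis of the theorem. The main obstacle I anticipate is the finite-time estimate inside a single snapshot: one must pick $c$ large enough that the cascade of up to $N$ topological levels inside every $G_k$ completes strictly before time $k\tau$; the normalization in $g_v^k$ is what makes a sliding-mode-style, rather than merely exponential, bound available, and the QUAD constants fix the required gain. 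Once that lemma is in place, the $\G$-based bookkeeping and the ``last inter-layer edge is a dashed edge into a root of $G_T$'' observation are clean combinatorics.
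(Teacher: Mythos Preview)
Your proposal is correct and follows essentially the same line as the paper's proof: both rest on the single-snapshot rule that a node synchronizes in $G_k$ exactly when all root ancestors of it in $G_k$ were already synchronized, and then propagate this rule through the fused graph $\G$ to reduce $S_T=V$ to a condition on the roots of $G_1$ reaching roots of $G_T$. You organise the propagation as a forward induction on layers of $\G$ and spell out the Lyapunov/sliding-mode estimate and the uniform choice of $c$, whereas the paper states the same propagation as a short backward recursion from $R_T$ down to $R_1$ and defers the finite-time estimate to its appendix; the substance is identical.
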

\begin{proof}
Let $R_k\subset V$ denote the set of root nodes in the graph $G_k$.  Root nodes do not have any incoming edges and hence they retain their synchronization status within a given snapshot. Note that in any snapshot, a node $i\in V$ is synchronized if all nodes in $R_k$ that have a path to $i$ is synchronized. The other set of root nodes which do not have a path to $i$ do not influence it and hence we do not require them to be synchronized. 
 $S_T=V$, if all the root nodes in $G_T$, i.e., $R_T$ are synchronized.  Hence the nodes in $R_T$ will be synchronized if all nodes in $R_{T-1}$  that have a path to nodes in $R_T$ in $G_{T-1}$ is synchronized. Applying this statement recursively for all snapshots starting from $T$ to $1$, we get that all nodes in $R_1$ that have a path to the nodes in $R_T$ in the fused graph $\G$ should be pinned initially, i.e., they should be a subset of $S_0$ for $S_T=V$.
\end{proof}

\subsection{Optimal pinning nodes}
In the previous section, Theorem \ref{thm:cond_sync} provided a sufficient condition for all the nodes in a temporal network to be synchronized for a given set of initial nodes that were synchronized. In this section we address Problem 2, where we find a minimum set of pinning nodes. This optimization problem can be formulated as an Integer Linear Programming (ILP) problem. Let $s_{i,k}$ denote the status of a system $i$ being  synchronized or not at the time $k\tau$ , i.e., $s_{i,k}=1$ if  synchronized and $s_{i,k}=0$ if unsynchronized. A key concept which is used in formulating the ILP is as follows : A node would be synchronized if all the nodes that have incoming edges to it are  synchronized. Hence, even if one of the incoming edge is unsynchronized, then the  system could become unsynchronized. This is captured by the following condition. 
\begin{equation}
    P_{i,k}s_{i,k}\leq \sum_{j,~ s.t. (i,j)\in E_k} s_{j,k} \label{eq:IP}
\end{equation}
where $P_{i,k}$ is the number of in-neighbours to node $i$ in $G_k$. 

Across two snapshots $k-1$ and $k$, only the root nodes of $G_k$ retain their status. Rest change depending on the interactions depicted by $G_k$. Hence if node $i$ is a root node in $G_k$, then $s_{i,k}=s_{i,k-1}$. We replace the equality by an inequality as follows in order to combine with \eqref{eq:IP}
\begin{equation}\label{eq:trans_cond}
   s_{i,k}\leq s_{i,k-1} 
\end{equation}
We explain shortly that this change will not affect our ILP solution.  
Let $L_{\G}$ denote the Laplacian of the fused graph explained in Figure \ref{fig:Supergraph Example}. Then  equations \eqref{eq:IP} and \eqref{eq:trans_cond} can be together written using $L_{\G}$ as follows.
\[L_{\G}s\leq 0_{nT},\,s^T=\begin{bmatrix}s_{1,0}&\dots&s_{N,0}&s_{1,1}&\dots &s_{N,T}\end{bmatrix}^T\]
Since we need all the nodes to be synchronized at time $\tau T$, we add the following constraint:
$\sum_{i=1}^{N}s_{i,T}=N$, $\forall i$
The ILP is as follows.
\begin{equation}
\begin{aligned}
\min & \sum_{i=1}^{N}s_{i,0}\\
& L_{\G}s\leq 0 \\
 & \sum_{i=1}^{N}s_{i,T}=N\\
&s_{i,k}\in\{0,1\}, \forall i,k
\end{aligned}
\label{eq:ILP}
\end{equation}  
In the above problem all nodes in last snapshot is forced to be 1 by the equality constraint. Then the constraint $s_{i,k}\leq s_{i,k-1}$ for $k=T$ forces all nodes in $G_{k-1}$ that are root nodes in $G_k$ to be 1., i.e., $s_{i,T-1}=s_{i,T}=1$ if $i$ is a root node in $G_T$. Hence replacing the equality constraint by an inequality constraint does not affect the solution.
Since the above problem is ILP, we relax the integer constraints and get an LP as follows.
\begin{equation}
\begin{aligned}
\min & \sum_{i=1}^{N}s_{i,0}\\
& L_{\G}s\leq0 \\
 & \sum_{i=1}^{N}s_{i,T}=N,\\
&0\leq s_{i,k}\leq 1, \forall i,k
\end{aligned}
\label{eq:LP}
\end{equation} 
\begin{lemma}
The solution to the LP in \eqref{eq:LP} is the same as ILP in \eqref{eq:ILP} and is also unique
\end{lemma}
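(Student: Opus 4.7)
The plan is to exhibit an explicit optimal solution that is integer-valued and to show it is already forced by the constraints of the LP relaxation. Let $V^{*}$ denote the set of vertices $(i,k)$ of the fused graph $\G$ from which there is a directed path in $\G$ to at least one vertex $(j,T)$ at the last layer, and let $s^{*}$ be the $\{0,1\}$-indicator of $V^{*}$. I will establish three facts: (a) every feasible $s$ satisfies $s_{(i,k)} = 1$ on $V^{*}$; (b) $s^{*}$ is feasible; and (c) the minimization forces the remaining coordinates that appear in the objective to be zero.

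First, the equality constraint $\sum_{i=1}^{N} s_{i,T} = N$ combined with the box constraint $s_{i,T} \leq 1$ immediately forces $s_{i,T} = 1$ for every $i$. These final-layer equalities serve as the base case for a reverse topological induction on $\G$, which is a well-defined DAG because each $G_k$ is a DAG (Assumption 2) and the inter-layer edges of $\G$ always go strictly forward in time. Writing the $(i,k)$-th row of $L_{\G} s \leq 0$ explicitly yields
\begin{equation*}
P_{(i,k)} \, s_{(i,k)} \;\leq\; \sum_{j \in N_{\mathrm{in}}(i,k)} s_{j},
\end{equation*}
so whenever $s_{(i,k)} = 1$ the upper bound $s_j \leq 1$ forces $s_j = 1$ for every in-neighbor $j$ of $(i,k)$ in $\G$. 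A key closure property I would next record is that every in-neighbor of a vertex in $V^{*}$ is itself in $V^{*}$: prepending the edge $j \to (i,k)$ to any path from $(i,k)$ to the last layer produces such a path starting at $j$. Combining the closure property with the row-wise implication, reverse topological induction on $\G$ gives $s_{(i,k)} = 1$ on all of $V^{*}$ for every feasible $s$.

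Next I would verify feasibility of $s^{*}$ itself for the ILP (and hence for the LP): the equality $\sum_i s^{*}_{i,T} = N$ holds because the whole last layer lies in $V^{*}$; the Laplacian constraint holds row by row because on rows $(i,k) \in V^{*}$ the closure property makes both sides equal to $P_{(i,k)}$, while on rows $(i,k) \notin V^{*}$ the left-hand side is zero and the right-hand side is nonnegative. Thus the ILP optimum is at most $|\{i : (i,0) \in V^{*}\}|$, and the matching lower bound for the LP follows from fact (a): any feasible LP solution obeys $s_{i,0} = 1$ for all $i$ with $(i,0) \in V^{*}$, so $\sum_i s_{i,0} \geq |\{i : (i,0) \in V^{*}\}|$. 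Since the LP optimum cannot fall below the ILP optimum, the two values coincide and are attained by the integer point $s^{*}$.

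For uniqueness of the $s_{i,0}$ components, which is the content relevant to the pinning problem, I would combine the two sides of the argument: the coordinates with $(i,0) \in V^{*}$ are forced to $1$ by the backward propagation, while the coordinates with $(i,0) \notin V^{*}$ appear in the objective with strictly positive coefficient and in no other binding constraint from below, so any optimum pushes them to the lower box bound $0$. The main obstacle I expect is the backward-propagation step, specifically the closure property "in-neighbors of $V^{*}$ lie in $V^{*}$," which is precisely what makes the indicator $s^{*}$ simultaneously forced from below by the Laplacian constraints and feasible from above; without Assumption 2 supplying a DAG structure on $\G$, the induction would have no well-founded order to proceed along.
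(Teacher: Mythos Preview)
Your argument is correct and follows essentially the same backward-propagation idea as the paper's proof: force $s_{i,T}=1$ from the equality constraint, then push the value $1$ backwards along the rows of $L_{\G}s\le 0$; you simply make this explicit by naming the reachable set $V^{*}$, verifying that its indicator $s^{*}$ is feasible, and correctly restricting the uniqueness claim to the $s_{i,0}$ components. One small slip: for a minimization, the LP relaxation value cannot \emph{exceed} the ILP value, not the other way around; your argument still closes because $s^{*}$ is feasible for both problems, giving the sandwich $|\{i:(i,0)\in V^{*}\}|\le \mathrm{LP}\le \mathrm{ILP}\le |\{i:(i,0)\in V^{*}\}|$ directly from facts (a) and (b).
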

\begin{proof}
The equality constraint of the above LP ensures that $s_{i,T}=1$ $\forall$ $i=1,\dots,N$. Then the constraint in \eqref{eq:trans_cond}, ensures that all nodes in $G_{k-1}$ which are root nodes in $G_k$ will have the status as 1. Then because of constraint in \eqref{eq:IP} all incoming neighbours to the above nodes in $G_{k-1}$ will also have status 1. Using the same argument we can conclude that all variables $s_{i,k}$ will take the value either 1 or 0 and not any real value. Hence the ILP solution and LP solution are the same. 

\end{proof}
\subsection{Maximum synchronizing nodes with constraints on pinning nodes}
In this section we address Problem 3. We first prove that this optimization problem belongs to the class of NP hard problems. 
We do this by showing it as a special case of Submodular Cost Submodular Knapsack (SCSK) problem \cite{iyer2013submodular,padmanabhan2023maximizing,iyer2013fast} which is as follows. For a given $p\leq N$ \\
\begin{equation}
\begin{aligned}
\max_{X \subset V} \quad & g(X)\\
\textrm{s.t.} \quad &f(X) \leq p    \\
\end{aligned}
\label{eq:scsk problem}
\end{equation}
where, $g$ and $f$ are submodular functions.
In our case, we want to maximise $|S_T|$ such that $|S_0|\leq q$.
We define a function $f:V\rightarrow \R$ as follows. For a node $v\in V$, $f(v)$ returns the cardinality of the minimum number of nodes that has to be pinned such that $v\in S_T$. Then the optimization problem is as follows. For a given $p\leq N$
\begin{equation}
\begin{aligned}
\max_{X \subset V} \quad & |X|\\
\textrm{s.t.} \quad &f(X) \leq p \\
\end{aligned}
\label{eq:card_cons_scsk}
\end{equation}
\begin{theorem} \label{thm:fsubmod}
    The function $f$ is submodular
\end{theorem}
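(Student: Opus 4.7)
The plan is to realize $f$ as a coverage function on a fixed ground set and then invoke the classical argument that such functions are submodular. Concretely, for each $v \in V$ I would define $A(v) \subseteq V$ to be the set of those root nodes of $G_1$ (equivalently, the sources of the fused graph $\mathcal{G}$) that possess a directed path in $\mathcal{G}$ terminating at the copy of $v$ in snapshot $G_T$. Tracing the recursive reasoning used in the proof of Theorem~\ref{thm:cond_sync}---a non-root node in $G_k$ is synchronized only if all of its in-neighbours in $G_k$ are synchronized, while a root node in $G_k$ inherits its status from the previous snapshot via a dashed edge in $\mathcal{G}$---it follows that a single node $v$ lies in $S_T$ if and only if every element of $A(v)$ lies in $S_0$. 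Hence the minimum pinning set that forces an entire subset $X \subseteq V$ into $S_T$ is exactly $\bigcup_{v \in X} A(v)$, and so
\[
f(X) \;=\; \Big| \bigcup_{v \in X} A(v) \Big|.
\]

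Once this identity is in hand, submodularity follows from the standard coverage-function computation. For any $P \subseteq Q \subseteq V$ and $s \notin Q$, the marginal gain at $P$ is
\[
f(P \cup \{s\}) - f(P) \;=\; \Big| A(s) \setminus \bigcup_{v \in P} A(v) \Big|,
\]
and the analogous expression holds at $Q$. Since $P \subseteq Q$ implies $\bigcup_{v \in P} A(v) \subseteq \bigcup_{v \in Q} A(v)$, the residual $A(s) \setminus \bigcup_{v \in P} A(v)$ contains $A(s) \setminus \bigcup_{v \in Q} A(v)$, so the inequality $f(P \cup \{s\}) - f(P) \geq f(Q \cup \{s\}) - f(Q)$ of Definition~1 is satisfied.

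The main obstacle I anticipate is not this final coverage-function step but the careful justification of $f(X) = |\bigcup_{v \in X} A(v)|$. The sufficiency direction, that pinning $\bigcup_{v \in X} A(v)$ indeed places $X$ inside $S_T$, is a direct snapshot-by-snapshot application of Theorem~\ref{thm:cond_sync}. Necessity requires showing that omitting any $u \in A(v)$ from $S_0$ leaves some path in $\mathcal{G}$ to the $(v,T)$-copy beginning at an unpinned source, and therefore prevents $v$ from entering $S_T$. Here one has to rule out the temptation of substituting a source by pinning some downstream non-source node: the inequality constraints \eqref{eq:IP} and \eqref{eq:trans_cond}, together with the coupling dynamics in \eqref{eq:sysmodel - unpinned}, force an initially pinned node with an unpinned source-ancestor to lose synchronization before time $\tau$, so only sources of $\mathcal{G}$ contribute to the minimum. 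With necessity established, the identification of $f$ as a coverage function is immediate and submodularity follows.
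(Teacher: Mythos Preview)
Your argument is correct and rests on the same structural observation as the paper's proof, namely that the minimum pinning set for a subset $X$ is the union, over $v\in X$, of the minimum pinning sets for the singletons---in the paper's notation, $\tilde f(A\cup B)=\tilde f(A)\cup\tilde f(B)$. Where you diverge is in the execution: you identify $f$ explicitly as a coverage function $f(X)=\bigl|\bigcup_{v\in X}A(v)\bigr|$ and then apply the diminishing-returns characterization (Definition~1, inequality~\eqref{eq:submod}) via the one-line inclusion $A(s)\setminus\bigcup_{v\in Q}A(v)\subseteq A(s)\setminus\bigcup_{v\in P}A(v)$. The paper instead works with the sum characterization~\eqref{eq:submod1} and carries out an explicit set-algebra computation, reducing the claim to $|\tilde f(A)\cap\tilde f(B)|\ge|\tilde f(A\cap B)|$ and verifying that inclusion directly. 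Your route is shorter and makes the reason for submodularity more transparent; the paper's computation is self-contained but somewhat heavier. You also devote more attention than the paper does to justifying the coverage identity itself (in particular the necessity direction and why pinning non-source nodes cannot substitute for pinning their source ancestors), which the paper simply asserts via the definition of $\tilde f$.
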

\begin{proof}
Define a set function  $\tilde{f}:V\rightarrow V$ as follows. For a set $X\subset V$, $\tilde{f}(X)$ returns all nodes that should be pinned for $X\subset S_T$. Note $|\fd(X)|=f(X)$ If two subsets of nodes of $V$, $A$ and $B$ are to be synchronized, the nodes of $\fd(A)$ and $\fd(B)$ need to be pinned together. So, $\fd(A \cup B) = \fd(A) \cup \fd(B)$. 

Let $A,B\in V$ such that $A \cap B = C$ and $\fd(A)\cap \fd(B)=\tilde{C}\neq\emptyset$.
\begin{align}
f(A)+f(B)  = &|\fd(A)|+|\fd(B)|& \nonumber\\
 = &|(\fd(A)-\tilde{C})\cup\tilde{C}|+|(\fd(B)-\tilde{C})\cup\tilde{C}|& \nonumber\\
= &|\fd(A)-\tilde{C}|+|\tilde{C}|+|\fd(B)-\tilde{C}|+|\tilde{C}|& \nonumber\\
f(A)+f(B) = &|\fd(A)-\tilde{C}| + |\fd(B)-\tilde{C}| + 2|\tilde{C}|&\label{eq:LHS Submodularity}
\end{align}
\vspace{-0.5cm}
\begin{equation*}
f(A \cup B) = |(\fd(A)-\tilde{C})\cup(\fd(B)-\tilde{C})\cup \tilde{C})|
\end{equation*}
\begin{equation}
f(A \cup B)+f(A \cap B) = |(\fd(A)-\tilde{C})|+|(\fd(B)-\tilde{C})| +|\tilde{C}|+f(C)\label{eq:RHS Submodularity} 
\end{equation}

For the $f$ to be submodular, $f(A)+f(B)\geq f(A \cup B)+f(A \cap B)$ $\forall$ $A,B\subseteq V$. From \eqref{eq:LHS Submodularity} and \eqref{eq:RHS Submodularity}, we get the condition : 
\begin{equation}
    |\tilde{C}| \geq f(C) = |\fd(C)|
    \label{eq:Submodularity Constraint}
\end{equation}

We now prove that the inequality \eqref{eq:Submodularity Constraint} always holds true.
\begin{align}
|\tilde{C}| & = |\fd(A)\cap\fd(B)|\nonumber\\
& = \big|\fd((A-C)\cup C) \cap \fd((B-C)\cup C) \big|\nonumber\\
    & = \big|\big( \fd(A-C)\cup \fd(C) \big) \cap \big( \fd(B-C)\cup \fd(C) \big) \big|\nonumber\\
|\tilde{C}| & = \big|\fd(C)\cup \big( \fd(A-C) \cap \fd(B-C)\big) \big|&\label{eq:Submodularity Proof RHS}
\end{align}

Since $ \fd(C) \subseteq \fd(C)\cup \big( \fd(A-C) \cap \fd(B-C)\big)$, \eqref{eq:Submodularity Constraint} always holds true. So, $f(A)+f(B)\geq f(A \cup B)+f(A \cap B)$ $\forall$ $A,B\subseteq V$ and hence, we can conclude that the function $f$ is submodular.
\end{proof}
From Theorem \ref{thm:fsubmod}, we can conclude that optimization problem in \eqref{eq:card_cons_scsk} is NP hard and hence Problem 3 is also NP hard. Therefore we propose to use a greedy algorithm as given in Algorithm \ref{alg:Greedy Algorithm}. 
For each $i\in V$, we identify the set of nodes that needs to be pinned such that $i\in S_T$. This can be computed by using the LP formulation in \eqref{eq:LP}. Modify the constraint $\sum_{i=1}^{N}s_{i,T}=N$ as 
$s_{i,T}=1 $ and $ s_{j,T}=0 $ for all other $j\neq i$.
Then we first add the node that requires least number of nodes to be pinned. Then iteratively we add nodes that require least number of new nodes to be pinned as long as the cardinality constraint on the maximum number of nodes to be pinned is satisfied.   

\begin{algorithm}[h]
\caption{Greedy Algorithm}
\begin{algorithmic}
\State $V = \{1$ , . . , $N\}$
\State $X = \emptyset$
\State $i = \emptyset$
\While{$f(X \cup i) \leq p$ and $|X| \leq N$}
    \State $X = X \cup i$
    \State $i = argmin_{j\in V-X}f(X \cup j)$
\EndWhile
\end{algorithmic}
\label{alg:Greedy Algorithm}
\end{algorithm}
 
\subsection{Simulation results}
We considered a network of Van der Pol oscillators and the dynamics of each oscillator is given by the following equation.
\begin{align*}
    \dot{p}&=q\\
    \dot{q}&=(1-p^2)q-p
\end{align*}

We set the desired trajectory to be the limit cycle in Van der Pol oscillator.  We assume the  oscillators are interacting over a temporal network given in Figure \ref{fig:Snapshots of Van der Pol Network}. Following the guidelines in the Appendix, we chose $c=225$.  We solved the LP given in \eqref{eq:LP} and identified that the two nodes $1$ and $2$ should be synchronized for all the entire network to be synchronized. We observe the entire network is synchronized by the end of 5 snapshots in Figure \ref{fig:Van der Pol Network Plot}. 

Next we implemented the greedy algorithm on a few examples to evaluate its performance. We randomly generated 120 temporal networks. The number of nodes in each network varied between 15 to 20. The cardinality constraints were also chosen randomly. 
We observed that the greedy returned the optimum value in all except for two cases. 

\begin{figure}[h]
\begin{subfigure}{0.15\textwidth}\centering
\begin{tikzpicture}[scale=0.7]
\begin{scope}[every node/.style={circle,ultra thick,draw=blue!100,minimum size=13pt,inner sep=0pt,draw}]
    \node (1) at (0,1) {\bf 1};
    \node (2) at (0.9511,0.3090) {\bf 2};
    \node (3) at (0.5878,-0.8090) {\bf 3};
    \node (4) at (-0.5878,-0.8090) {\bf 4};
    \node (5) at (-0.9511,0.3090) {\bf 5};
\end{scope}
\draw[blue, very thick,-stealth] (2) -- (3);
\draw[blue, very thick,-stealth] (5) -- (4);
\end{tikzpicture}
\caption{Snapshot 1}
\end{subfigure}
\begin{subfigure}{0.15\textwidth}\centering
\begin{tikzpicture}[scale=0.7]
\begin{scope}[every node/.style={circle,ultra thick,draw=blue!100,minimum size=13pt,inner sep=0pt,draw}]
    \node (6) at (4,1) {\bf 1};
    \node (7) at (4.9511,0.3090) {\bf 2};
    \node (8) at (4.5878,-0.8090) {\bf 3};
    \node (9) at (3.4122,-0.8090) {\bf 4};
    \node (10) at (3.0489,0.3090) {\bf 5};
\end{scope}
\draw[blue, very thick,-stealth] (9) -- (7);
\draw[blue, very thick,-stealth] (10) -- (9);
\end{tikzpicture}
\caption{Snapshot 2}
\end{subfigure}
\begin{subfigure}{0.15\textwidth}\centering
\begin{tikzpicture}[scale=0.7]
\begin{scope}[every node/.style={circle,ultra thick,draw=blue!100,minimum size=13pt,inner sep=0pt,draw}]
    \node (11) at (0,-2) {\bf 1};
    \node (12) at (0.9511,-2.691) {\bf 2};
    \node (13) at (0.5878,-3.8090) {\bf 3};
    \node (14) at (-0.5878,-3.8090) {\bf 4};
    \node (15) at (-0.9511,-2.691) {\bf 5};
\end{scope}
\draw[blue, very thick,-stealth] (12) -- (14);
\draw[blue, very thick,-stealth] (13) -- (15);
\end{tikzpicture}
\caption{Snapshot 3}
\end{subfigure}
\begin{subfigure}{0.15\textwidth}\centering
\begin{tikzpicture}[scale=0.7]
\begin{scope}[every node/.style={circle,ultra thick,draw=blue!100,minimum size=13pt,inner sep=0pt,draw}]
    \node (16) at (4,-2) {\bf 1};
    \node (17) at (4.9511,-2.691) {\bf 2};
    \node (18) at (4.5878,-3.8090) {\bf 3};
    \node (19) at (3.4122,-3.8090) {\bf 4};
    \node (20) at (3.0489,-2.691) {\bf 5};
\end{scope}
\draw[blue, very thick,-stealth] (17) -- (20);
\draw[blue, very thick,-stealth] (17) -- (19);
\end{tikzpicture}
\caption{Snapshot 4}
\end{subfigure}
\centering
\begin{subfigure}{0.15\textwidth}\centering
\begin{tikzpicture}[scale=0.7]
\begin{scope}[every node/.style={circle,ultra thick,draw=blue!100,minimum size=13pt,inner sep=0pt,draw}]
    \node (21) at (2,-5) {\bf 1};
    \node (22) at (2.9511,-5.691) {\bf 2};
    \node (23) at (2.5878,-6.8090) {\bf 3};
    \node (24) at (1.4122,-6.8090) {\bf 4};
    \node (25) at (1.0489,-5.691) {\bf 5};
\end{scope}
\draw[blue, very thick,-stealth] (21) -- (24);
\draw[blue, very thick,-stealth] (23) -- (25);
\draw[blue, very thick,-stealth] (23) -- (22);
\end{tikzpicture}
\caption{Snapshot 5}
\end{subfigure}
\caption{Snapshot representation of the Van der Pol Network}
\label{fig:Snapshots of Van der Pol Network}
\end{figure}
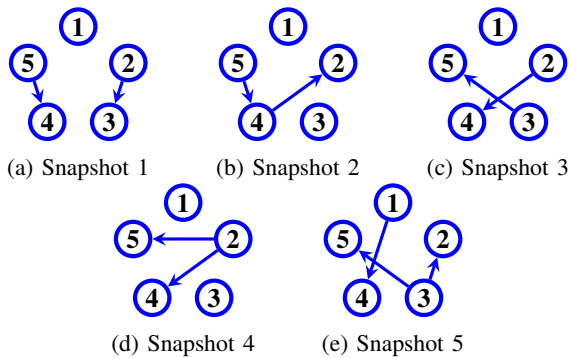

\begin{figure}[h!]
        \begin{subfigure}{0.22\textwidth}
            \includegraphics[width=\textwidth]{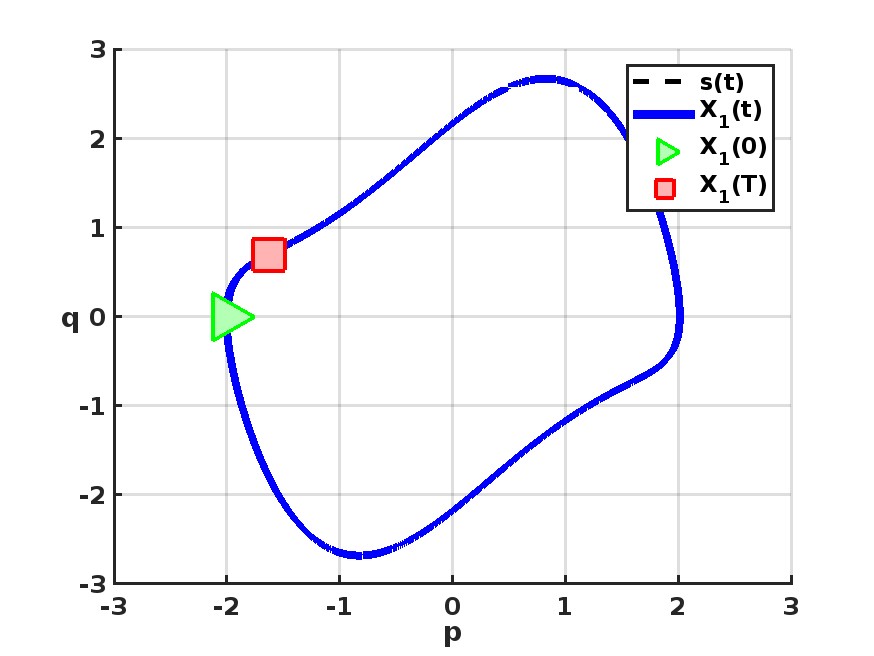}
            \caption{Node 1, $x_{1}(0) = (-2,0)$}
        \end{subfigure}
        \begin{subfigure}{0.22\textwidth}
            \includegraphics[width=\textwidth]{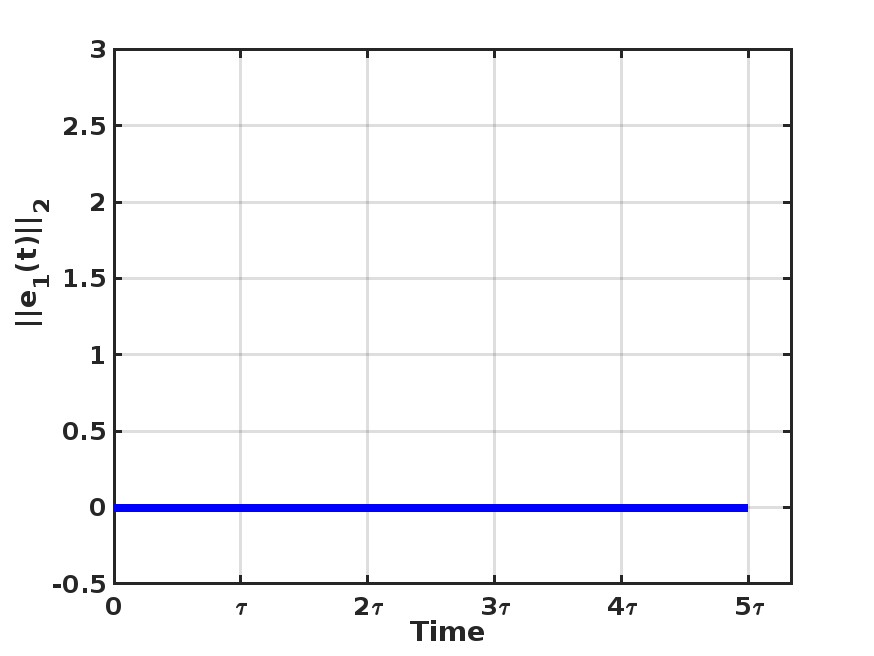}
            \caption{$e_{1}(t)=x_{1}(t)-s(t)$}
        \end{subfigure}
        \hfill 
        \begin{subfigure}{0.22\textwidth}
            \includegraphics[width=\textwidth]{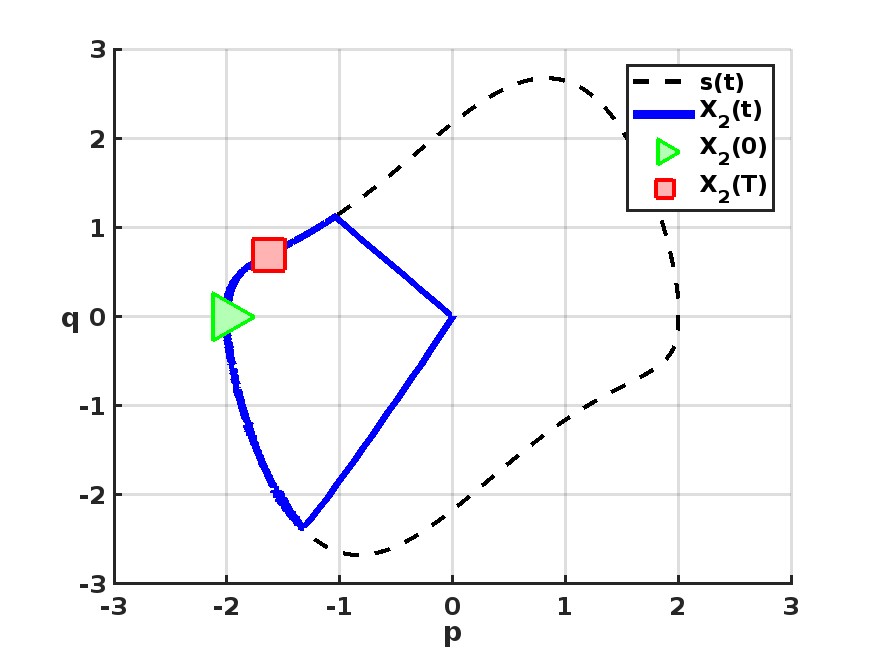}
            \caption{Node 2, $x_{2}(0) = (-2,0)$}
        \end{subfigure}
        \begin{subfigure}{0.22\textwidth}
            \includegraphics[width=\textwidth]{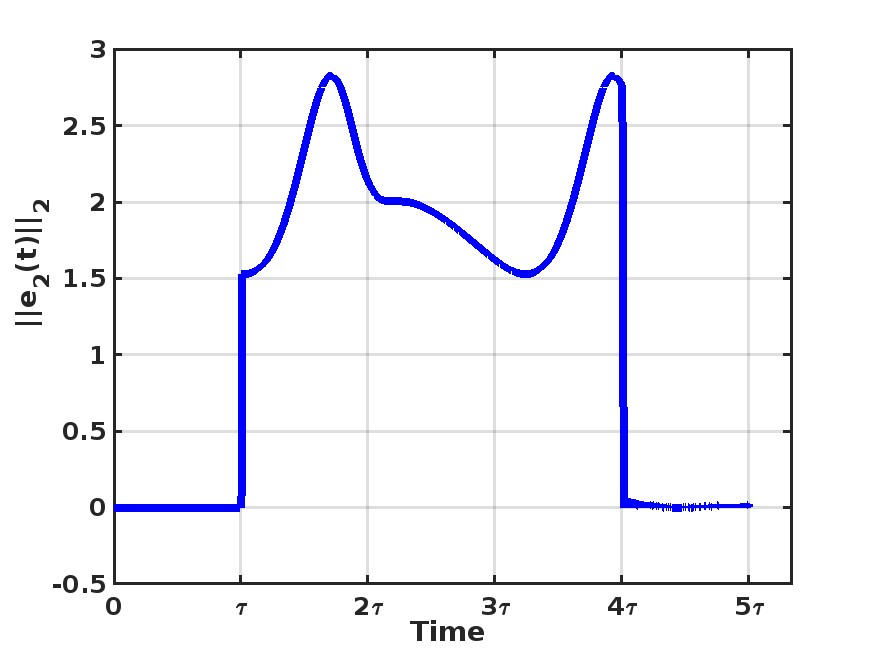}
            \caption{$e_{2}(t)=x_{2}(t)-s(t)$}
        \end{subfigure}        
        \hfill 
        \begin{subfigure}{0.22\textwidth}
            \includegraphics[width=\textwidth]{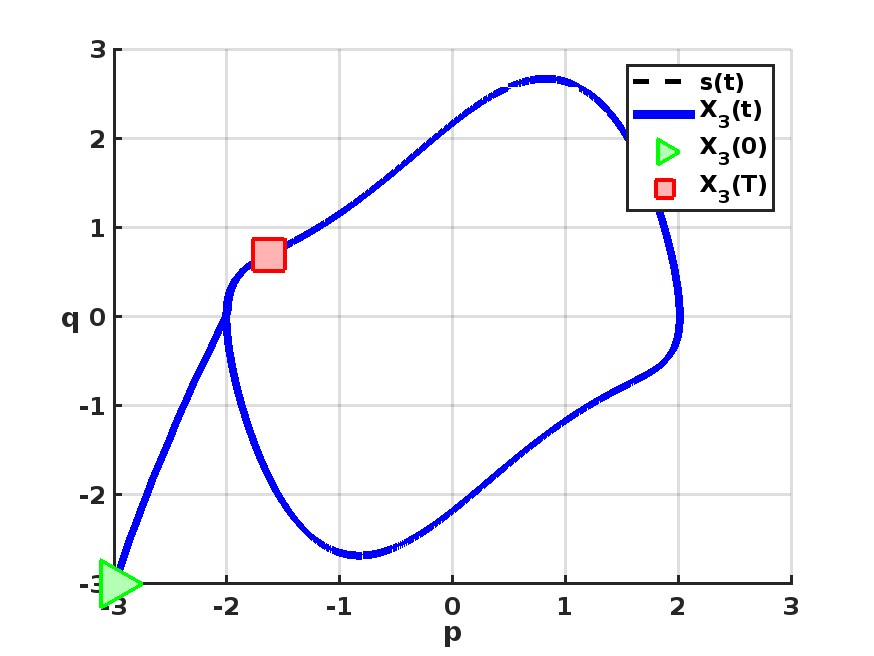}
            \caption{Node 3, $x_{3}(0) = (-3,-3)$}
        \end{subfigure}
        \begin{subfigure}{0.22\textwidth}
            \includegraphics[width=\textwidth]{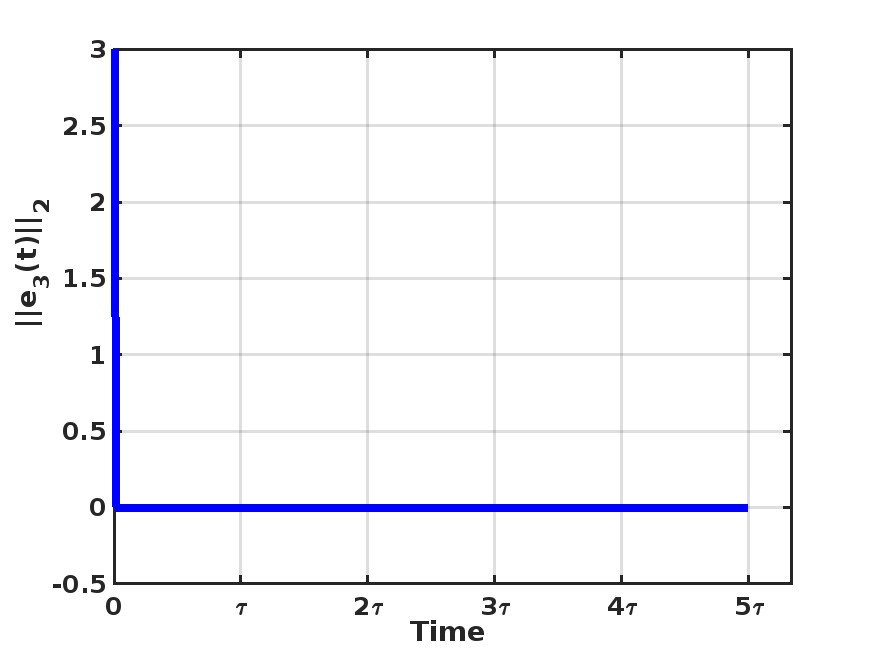}
            \caption{$e_{3}(t)=x_{3}(t)-s(t)$}
        \end{subfigure}
        \hfill 
        \begin{subfigure}{0.22\textwidth}
            \includegraphics[width=\textwidth]{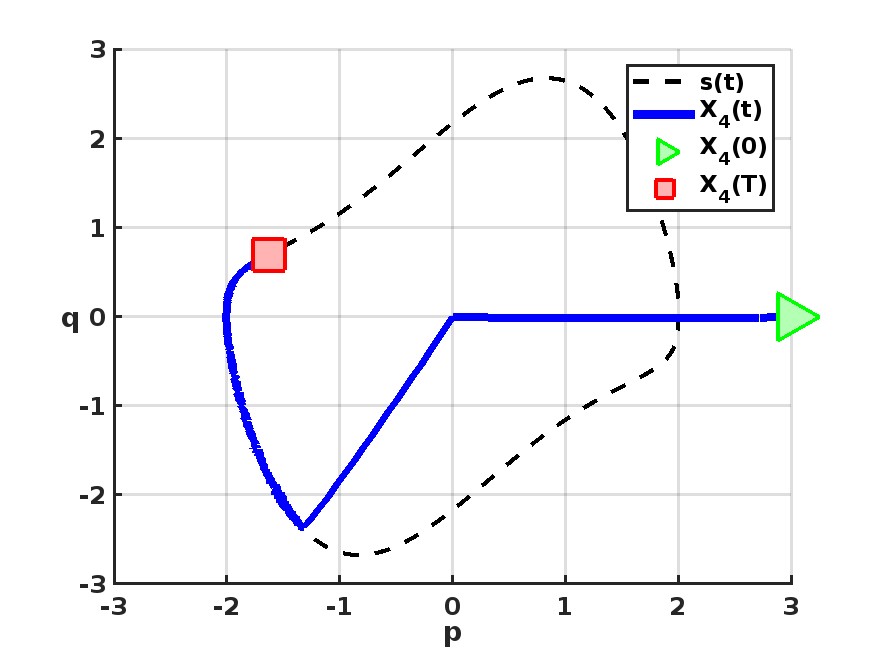}
            \caption{Node 4, $x_{4}(0) = (3,0)$}
        \end{subfigure}
        \begin{subfigure}{0.22\textwidth}
            \includegraphics[width=\textwidth]{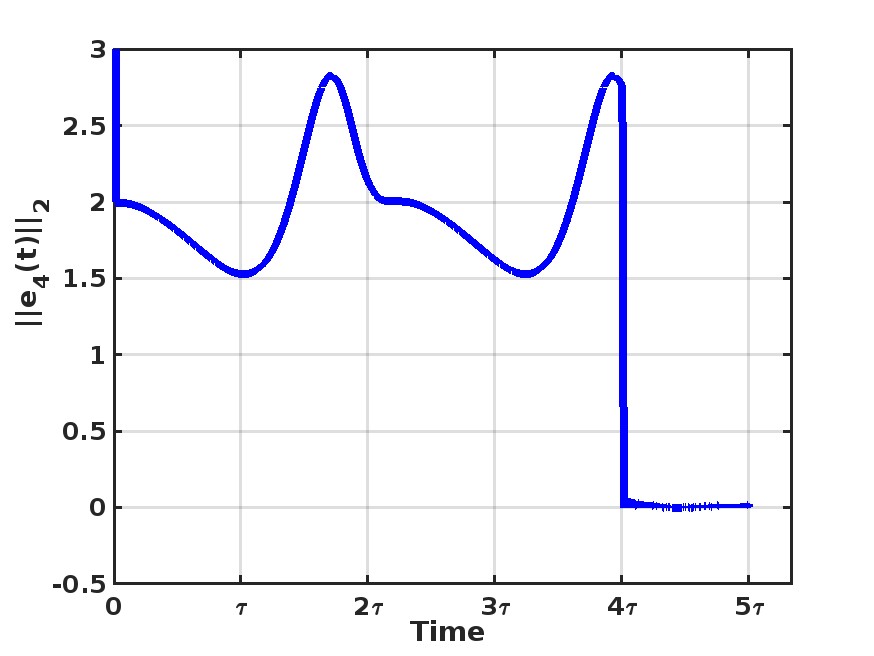}
            \caption{$e_{4}(t)=x_{4}(t)-s(t)$}
        \end{subfigure}
        \hfill
        \centering
        \begin{subfigure}{0.22\textwidth}
            \vspace{10pt}
            \includegraphics[width=\textwidth]{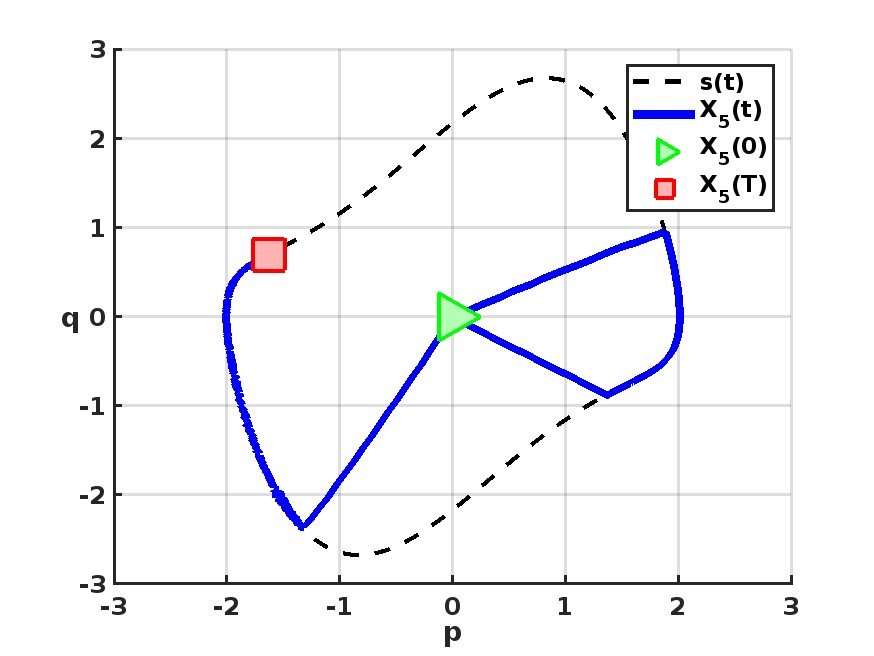}
            \caption{Node 5, $x_{5}(0) = (0,0)$}
        \end{subfigure}
        \begin{subfigure}{0.22\textwidth}
            \includegraphics[width=\textwidth]{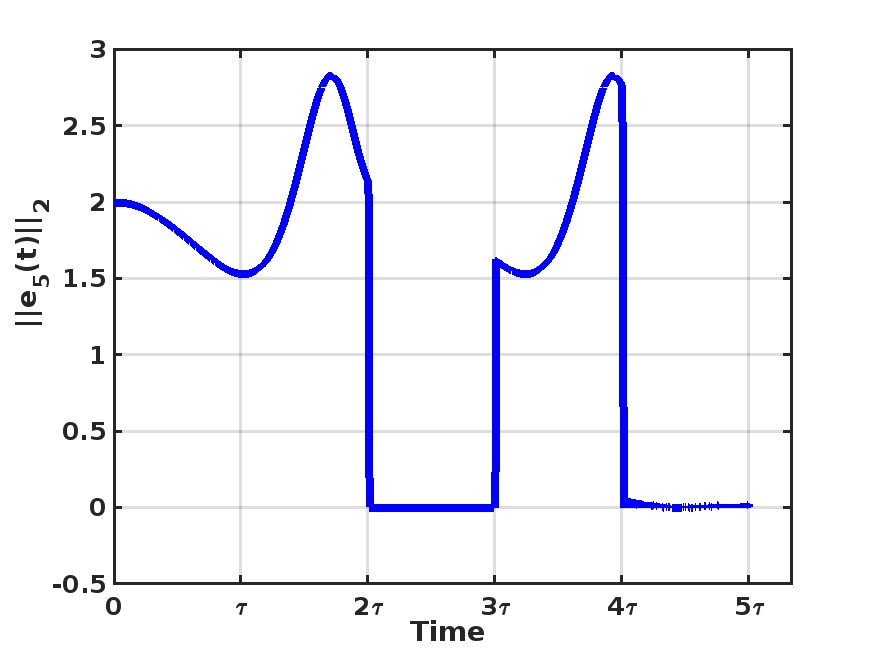}
            \caption{$e_{5}(t)=x_{5}(t)-s(t)$}
        \end{subfigure}
        \hfill    
        \caption{Synchronization of a 5 Node Temporal Network }
        \label{fig:Van der Pol Network Plot}
    \end{figure}

\section{Conclusion}
In this paper we addressed the pinning control synchronization of a network of dynamical systems with time varying interactions modelled using the temporal network framework.  We derived a sufficient condition to synchronize  all the systems in the network to some desired trajectory for a given set of pinning nodes.
We addressed the problem of optimizing the number of pinning nodes to synchronize the network by formulating it as a Linear programming problem.  We next addressed the problem of maximizing the number of synchronized nodes when there are constraints on the number of nodes that could be pinned. We showed that this is a special case of submodular cost submodular knapsack (SCSK) problem which is NP-hard. We proposed a greedy heuristic which gave near optimal solutions in simulation results. In future, we will extend the analysis to temporal networks with directed cycles and networks with non-identical dynamics.
\normalem
\bibliographystyle{ieeetr}
\bibliography{References}

\begin{thebibliography}{10}

\bibitem{zha2009}
J.~Zhao, D.~J. Hill, and T.~Liu, ``Synchronization of complex dynamical networks with switching topology: A switched system point of view,'' {\em Automatica}, vol.~45, no.~11, pp.~2502--2511, 2009.

\bibitem{liu2021edge}
D.~Liu and D.~Ye, ``Edge-based decentralized adaptive pinning synchronization of complex networks under link attacks,'' {\em IEEE Transactions on Neural Networks and Learning Systems}, vol.~33, no.~9, pp.~4815--4825, 2021.

\bibitem{pes1975}
C.~S. Peskin, ``Mathematical aspects of heart physiology,'' {\em Courant Inst. Math}, 1975.

\bibitem{li2003}
X.~Li and G.~Chen, ``Synchronization and desynchronization of complex dynamical networks: an engineering viewpoint,'' {\em IEEE Transactions on Circuits and Systems I: Fundamental Theory and Applications}, vol.~50, no.~11, pp.~1381--1390, 2003.

\bibitem{sorrentino2007controllability}
F.~Sorrentino, M.~Di~Bernardo, F.~Garofalo, and G.~Chen, ``Controllability of complex networks via pinning,'' {\em Physical Review E}, vol.~75, no.~4, p.~046103, 2007.

\bibitem{tang2014synchronization}
Y.~Tang, F.~Qian, H.~Gao, and J.~Kurths, ``Synchronization in complex networks and its application--a survey of recent advances and challenges,'' {\em Annual Reviews in Control}, vol.~38, no.~2, pp.~184--198, 2014.

\bibitem{wang2002pinning}
X.~F. Wang and G.~Chen, ``Pinning control of scale-free dynamical networks,'' {\em Physica A: Statistical Mechanics and its Applications}, vol.~310, no.~3-4, pp.~521--531, 2002.

\bibitem{su2012decentralized}
H.~Su, Z.~Rong, M.~Z. Chen, X.~Wang, G.~Chen, and H.~Wang, ``Decentralized adaptive pinning control for cluster synchronization of complex dynamical networks,'' {\em IEEE Transactions on Cybernetics}, vol.~43, no.~1, pp.~394--399, 2012.

\bibitem{li2004pinning}
X.~Li, X.~Wang, and G.~Chen, ``Pinning a complex dynamical network to its equilibrium,'' {\em IEEE Transactions on Circuits and Systems I: Regular Papers}, vol.~51, no.~10, pp.~2074--2087, 2004.

\bibitem{lu2010global}
W.~Lu, X.~Li, and Z.~Rong, ``Global stabilization of complex networks with digraph topologies via a local pinning algorithm,'' {\em Automatica}, vol.~46, no.~1, pp.~116--121, 2010.

\bibitem{zhan2007chaos}
M.~Zhan, J.~Gao, Y.~Wu, and J.~Xiao, ``Chaos synchronization in coupled systems by applying pinning control,'' {\em Physical Review E}, vol.~76, no.~3, p.~036203, 2007.

\bibitem{porfiri2008criteria}
M.~Porfiri and M.~Di~Bernardo, ``Criteria for global pinning-controllability of complex networks,'' {\em Automatica}, vol.~44, no.~12, pp.~3100--3106, 2008.

\bibitem{song2009pinning}
Q.~Song and J.~Cao, ``On pinning synchronization of directed and undirected complex dynamical networks,'' {\em IEEE Transactions on Circuits and Systems I: Regular Papers}, vol.~57, no.~3, pp.~672--680, 2009.

\bibitem{liuzza2020pinning}
D.~Liuzza and P.~De~Lellis, ``Pinning control of higher order nonlinear network systems,'' {\em IEEE Control Systems Letters}, vol.~5, no.~4, pp.~1225--1230, 2020.

\bibitem{della2023nonlinear}
F.~Della~Rossa, C.~J. Vega, and P.~De~Lellis, ``Nonlinear pinning control of stochastic network systems,'' {\em Automatica}, vol.~147, p.~110712, 2023.

\bibitem{montenbruck2015practical}
J.~M. Montenbruck, M.~B{\"u}rger, and F.~Allg{\"o}wer, ``Practical synchronization with diffusive couplings,'' {\em Automatica}, vol.~53, pp.~235--243, 2015.

\bibitem{val2022}
M.~V. Srighakollapu, R.~K. Kalaimani, and R.~Pasumarthy, ``Optimizing driver nodes for structural controllability of temporal networks,'' {\em IEEE Transactions on Control of Network Systems}, vol.~9, no.~1, pp.~380--389, 2022.

\bibitem{masuda2013temporal}
N.~Masuda, K.~Klemm, and V.~M. Egu{\'\i}luz, ``Temporal networks: slowing down diffusion by long lasting interactions,'' {\em Physical Review Letters}, vol.~111, no.~18, p.~188701, 2013.

\bibitem{ghosh2022synchronized}
D.~Ghosh, M.~Frasca, A.~Rizzo, S.~Majhi, S.~Rakshit, K.~Alfaro-Bittner, and S.~Boccaletti, ``The synchronized dynamics of time-varying networks,'' {\em Physics Reports}, vol.~949, pp.~1--63, 2022.

\bibitem{jafarizadeh2022pinning}
S.~Jafarizadeh, ``Pinning control of dynamical networks with optimal convergence rate,'' {\em IEEE Transactions on Systems, Man, and Cybernetics: Systems}, vol.~52, no.~11, pp.~7160--7172, 2022.

\bibitem{delellis2018partial}
P.~DeLellis, F.~Garofalo, and F.~L. Iudice, ``The partial pinning control strategy for large complex networks,'' {\em Automatica}, vol.~89, pp.~111--116, 2018.

\bibitem{iudice2022bounded}
F.~L. Iudice, F.~Garofalo, and P.~De~Lellis, ``Bounded partial pinning control of network dynamical systems,'' {\em IEEE Transactions on Control of Network Systems}, 2022.

\bibitem{zhou2021pinning}
X.~Zhou, L.~Li, and X.-W. Zhao, ``Pinning synchronization of delayed complex networks under self-triggered control,'' {\em Journal of the Franklin Institute}, vol.~358, no.~2, pp.~1599--1618, 2021.

\bibitem{yan2020}
X.~Yang, X.~Wan, C.~Zunshui, J.~Cao, Y.~Liu, and L.~Rutkowski, ``Synchronization of switched discrete-time neural networks via quantized output control with actuator fault,'' {\em IEEE Transactions on Neural Networks and Learning Systems}, vol.~32, no.~9, pp.~4191--4201, 2020.

\bibitem{wang2022pinning}
M.~Wang, X.~Lu, Q.~Yang, Z.~Ma, J.~Cheng, and K.~Li, ``Pinning control of successive lag synchronization on a dynamical network with noise perturbation,'' {\em Physica A: Statistical Mechanics and its Applications}, vol.~593, p.~126899, 2022.

\bibitem{chen2021finite}
T.~Chen, W.~Lu, and X.~Liu, ``Finite time convergence of pinning synchronization with a single nonlinear controller,'' {\em Neural Networks}, vol.~143, pp.~246--249, 2021.

\bibitem{yu2017sliding}
X.~Yu and O.~Kaynak, ``Sliding mode control made smarter: A computational intelligence perspective,'' {\em IEEE Systems, Man, and Cybernetics Magazine}, vol.~3, no.~2, pp.~31--34, 2017.

\bibitem{edwards1998sliding}
C.~Edwards and S.~Spurgeon, {\em Sliding mode control: theory and applications}.
\newblock Crc Press, 1998.

\bibitem{shtessel2014sliding}
Y.~Shtessel, C.~Edwards, L.~Fridman, A.~Levant, {\em et~al.}, {\em Sliding mode control and observation}, vol.~10.
\newblock Springer, 2014.

\bibitem{liu2008boundedness}
X.~Liu and T.~Chen, ``Boundedness and synchronization of y-coupled lorenz systems with or without controllers,'' {\em Physica D: Nonlinear Phenomena}, vol.~237, no.~5, pp.~630--639, 2008.

\bibitem{delellis2009novel}
P.~DeLellis, F.~Garofalo, {\em et~al.}, ``Novel decentralized adaptive strategies for the synchronization of complex networks,'' {\em Automatica}, vol.~45, no.~5, pp.~1312--1318, 2009.

\bibitem{delellis2010quad}
P.~DeLellis, M.~di~Bernardo, and G.~Russo, ``On quad, lipschitz, and contracting vector fields for consensus and synchronization of networks,'' {\em IEEE Transactions on Circuits and Systems I: Regular Papers}, vol.~58, no.~3, pp.~576--583, 2010.

\bibitem{iyer2013submodular}
R.~K. Iyer and J.~A. Bilmes, ``Submodular optimization with submodular cover and submodular knapsack constraints,'' {\em Advances in neural information processing systems}, vol.~26, 2013.

\bibitem{padmanabhan2023maximizing}
M.~R. Padmanabhan, Y.~Zhu, S.~Basu, and A.~Pavan, ``Maximizing submodular functions under submodular constraints,'' in {\em Uncertainty in Artificial Intelligence}, pp.~1618--1627, PMLR, 2023.

\bibitem{iyer2013fast}
R.~Iyer, S.~Jegelka, and J.~Bilmes, ``Fast semidifferential-based submodular function optimization,'' in {\em International Conference on Machine Learning}, pp.~855--863, PMLR, 2013.

\end{thebibliography}
\section{Appendix}\label{sec:appendix}
\subsection{Finite time synchronization for pinning nodes}
An agent/node that is a pinning node will have the following dynamics. For $i\in S_0$, 
\begin{equation}
    \dot{x}_{i}(t) = f(x_{i}(t))+\kappa u(t),
    \label{eq:sysmodel - pinned}
\end{equation}
where $\kappa$ is input gain and 
\begin{equation*}
     u(t) = \begin{cases}
     \dfrac{x(t)-s(t)}{||x(t)-s(t)||_{2}}, &  x(t)\neq s(t)\\
     \mathbb{0}_{n}, & x(t)=s(t) \\ 
     \end{cases}
 \end{equation*}

The above control input will ensure that the state in \eqref{eq:sysmodel - pinned} will be synchronized  to $s(t)$ in finite time for a suitable choice of input gain $\kappa$. This input is inspired from the sliding mode literature \cite{yu2017sliding,edwards1998sliding,shtessel2014sliding}. 

 \subsection{Finite time synchronization in each snapshot of temporal network}
 As discussed in Section \ref{sec:prob_form}, each snapshot $k$ exists for the duration $(k\tau,(k+1)\tau]$. Once the pinning nodes are synchronized, all the other systems interact with each other in order that the entire network is synchronized.
 Consider the dynamics of each agent as given in \eqref{eq:sysmodel - unpinned}. 
We proceed to derive the conditions on $c$ required for finite time synchronization.  Let $e_{i}=x_{i}-s$ be the difference between the states of the node $i$ and the desired trajectory $s$. We choose a quadratic lyapunov function for each node $i$ for an interval $\theta$ as follows:
\[V_{i} = \dfrac{1}{2}e_{i}^{T}e_{i} = \dfrac{1}{2}||e_{i}||^{2}_{2}\]
Taking the time derivative we get 
\begin{align*}
\dot{V}_{i} & = \dfrac{d}{dt}(V_{i}) =\dfrac{\partial V_{i}}{\partial e_{i}}^{T}\dfrac{de_{i}}{dt}=e_{i}^{T}(\dot{x}_{i}-\dot{s})&\\
& = e_{i}^{T}\Big(f(x_{i})-f(s)-cg^{k}_{i}(X)\Big)&\\
\dot{V}_{i} & = e_{i}^{T}\Big(f(x_{i})-f(s)\Big)-\begin{cases}
    ce_{i}^{T}g^{k}_{i}(X)&,x_{i}\neq s\\
    0&,x_{i}=s
\end{cases}&\label{eq:V_dot equation}\\
&\text{If all the incoming edges are from synchronized nodes}&\\
&\text{(i.e) $e_{j}=0$ $\forall j\neq i,$}&\\
\dot{V}_{i} & = e_{i}^{T}\Big(f(x_{i})-f(s)\Big)-\begin{cases}
    ce_{i}^{T}\dfrac{L_{ii}(k-1)e_{i}}{||L_{ii}(k-1)e_{i}||_{2}}&,x_{i}\neq s\\
    0&,x_{i}=s
\end{cases}&\\
 & \dot{V}_{i} = e_{i}^{T}\Big(f(x_{i})-f(s)\Big)-
    \begin{cases}
    c||e_{i}||_{2}&,x_{i}\neq s\\
    0&,x_{i}=s
    \end{cases}&\\
    &\text{Using the QUAD assumption \eqref{eq:QUAD Assumption},}&\\
    \dot{V} & \leq e_{i}^{T}\left(\Delta -\omega I_{n}\right)e_{i}-\begin{cases}
    c||e_{i}||_{2}&,x_{i} \neq s\\
    0&,x_{i}=s
\end{cases}&\\
&\text{Let }Q=\Delta -\omega I_{n}\\
     & \leq \lambda_{max}(Q)||e_{i}||_{2}^{2}-\begin{cases}
    c||e_{i}||_{2}&,x_{i}\neq s\\
    0&,x_{i}=s
\end{cases}&\\
    \dot{V}_{i} & \leq \begin{cases}
    ||e_{i}||_{2}\Big(||e_{i}||_{2}\lambda_{max}(Q)-c\Big)&,x_{i}\neq s\\
    0&,x_{i}=s
\end{cases}&\\
\dot{V}_{i} & \leq \begin{cases} 
\sqrt{2V_{i}}\Big(\lambda_{max}(Q)\sqrt{2V_{i}}-c\Big)&,x_{i}\neq s\\
0&,x_{i}=s
\end{cases}&
\end{align*}
For $x_{i}\neq s$,
\begin{equation}
    \dot{V}_{i}\leq \sqrt{2V_{i}}\Big(\lambda_{max}(Q)\sqrt{2V_{i}}-c\Big)
    \label{eq:Synchronization Condition clause 1}
\end{equation} 
For $x_{i}=s$,
\begin{equation}
    \dot{V}_{i}=0
    \label{eq:Synchronization Condition clause 2}
\end{equation} 
Rearranging the terms in \eqref{eq:Synchronization Condition clause 1} and integrating both sides,
\begin{align}
    &\left[\dfrac{\ln\left( c - \lambda_{max}(Q)\sqrt{2V_{i}(t)}\right)}{\lambda_{max}(Q)}\right]^{V_{i}(\theta)}_{V_{i}(0)} \leq \theta \nonumber\\
    &\text{Let }c > \lambda_{max}(Q)\sqrt{2V_{i}(0)}\nonumber\\
    &\ln\left( \dfrac{c - \lambda_{max}(Q)\sqrt{2V_{i}(\theta)}}{c - \lambda_{max}(Q)\sqrt{2V_{i}(0)}}\right ) \leq \lambda_{max}(Q)\theta\nonumber\\
    \end{align}
 
    \begin{align}
    &||e_{i}(\theta)||_{2}\geq \dfrac{c -  \left(c - \lambda_{max}(Q)||e_{i}(0)||_{2}\right)\boldsymbol{e}^{\lambda_{max}(Q)\theta}}{\lambda_{max}(Q)}
\end{align}
For the node to be synchronized within $\theta$, $||e_{i}(\theta)||_{2} = 0$.
\begin{align}
    \implies & \left(1-\boldsymbol{e}^{-\lambda_{max}(Q)\theta} \right)c\geq \lambda_{max}(Q)||e_{i}((k-1)\theta))||_{2}  \nonumber\\
    &\text{Since } \lambda_{max}(Q)\theta>0 \text{ ,}\nonumber\\
    & c \geq \dfrac{\lambda_{max}(Q)||e_{i}(0)||_{2}}{\left(1 - \boldsymbol{e}^{-\lambda_{max}Q\theta}\right)}\label{eq:Sufficient Coupling Strength - Node - 1}
\end{align}
We observe that the coupling strength $c$ depends on the initial error, system dynamics which determine $\lambda_{max}(Q)$ and $\theta$ the duration within which the finite time synchronization should be achieved. A node could be synchronized at the last snapshot of the temporal network also. Hence, the initial error of the node when it is synchronized is not known.
Therefore using equation \eqref{eq:Sufficient Coupling Strength - Node - 1} as a guideline, we choose a high enough value for $c$ to ensure finite time synchronization.
\balance
Note in each snapshot, the synchronization starts from the root nodes of $G_k$ and then depending on the graph structure and synchronization status of root nodes, each node subsequently gets synchronized if possible. Hence for a leaf node in $G_k$ to get synchronized we need time at most $D_k\theta$, where $D_k$ is the diameter of the graph $G_k$. The duration of each snapshot, $\tau$ depends on the application and hence we assume $\theta$, the finite time taken for a node to get synchronized  satisfies the following condition: $\theta \leq \frac{\tau}{D_k}$
Note that this an upper bound and depending on the graph structure, one can choose better values for $\theta$ with the knowledge of the graph and initial conditions.

\end{document}